\newcommand{\eq}{\begin{equation}}
\newcommand{\en}{\end{equation}}
\newcommand{\eqa}{\begin{eqnarray}}
\newcommand{\ena}{\end{eqnarray}}
\newtheorem{theorem}{Theorem}
\begin{document}

\begin{center}
{\large{\bf  Quantum partial search for uneven distribution of multiple target items}}\\[7mm]

{\small Kun Zhang \footnote{kun.h.zhang@stonybrook.edu} and Vladimir Korepin
\footnote{vladimir.korepin@stonybrook.edu}}\\[3mm]

\small{\textit{$^{1}$Department of Physics and Astronomy, State University of New York at Stony Brook, Stony Brook, NY 11794-3800 \\
$^{2}$C.N. Yang Institute for Theoretical Physics, State University of New York at Stony Brook, Stony Brook, NY 11794-3840}} \\[1cm]

\end{center}

\begin{center}\parbox{13cm}{

\centerline{\small  \bf Abstract}

Quantum partial search algorithm is approximate search. It aims to find a target block (which has the target items). It runs a little faster than full Grover search. In this paper, we consider quantum partial search algorithm for multiple target items unevenly distributed in database (target blocks have different number of target items). The algorithm we describe can locate one of the target blocks. Efficiency of the algorithm is measured by number of queries to the oracle. We optimize the algorithm in order to improve efficiency. By perturbation method, we find that the algorithm runs the fastest when target items are evenly distributed in database.

\vspace{.5cm}

{\bf \small Key Words:} database search, approximate search, multiple targets, Grover search, quantum algorithm, optimization\\

{\bf \small  PACS numbers:} 03.67.-a, 03.67.Lx}

\end{center}
\vspace{.2cm}

\section{Introduction}

\label{intro}

Quantum partial search algorithm \cite{GR05,KG06,KX09} trades accuracy for speed, namely, the algorithm finds the block where the target item is located instead of the exact address of the target item. The algorithm is based on the full quantum search algorithm (famous Grover algorithm \cite{Grover96,Grover97}). Grover algorithm acquires quadratic speedup over corresponding classical search algorithm. Grover algorithm finds the target item in the database. Target item also called marked item in the literature. Grover algorithm is optimal \cite{Zalka99,BBHT96} and the quantum partial search algorithm is also optimal among other protocols \cite{Korepin05,Korepin06,KX07,KV06}. The generalization of partial search algorithm for multiple target items has been studied in \cite{CV07,CPB09}. In this paper, we consider the case that different blocks have different number of target items (uneven distribution).

The total number of items in the database is denoted by $N$. Classically, we consider database as a set of $N$ items. In quantum case, the database is an $N$ dimensional Hilbert space. Each basis vector $|y\rangle$ corresponds to an item. Both the Grover algorithm and partial search algorithm consist of repetitions of the Grover iterations. The Grover iteration is based on the oracle model \cite{NC11} and each of Grover iteration has one query to the oracle (black box). We use the number of queries to the oracle (number of Grover iteration) as the complexity measure. If there is only one target item, Grover algorithm finds the target item with probability close to 1 in
\eq
j_{\text{full}}=\frac \pi 4 \sqrt N+\mathcal O(\frac 1 N),\quad\quad N\rightarrow \infty
\en
number of queries to the oracle \cite{Grover96,NC11}. If we have total of $z$ target items, Grover algorithm can find one of the marked items in
\eq
j_{\text{full}}=\frac \pi 4 \sqrt{\frac N  z}+\mathcal O(\frac 1 N),\quad\quad N\rightarrow \infty
\en
queries to the oracle. Quantum partial search algorithm aims to find the blocks with target items instead of the accurate locations of the target items. We can divide the database of $N$ items into $K$ blocks which have $b$ items. Each block has the same number of items, i.e., $N=bK$. In general, partial search algorithm can win over the full Grover algorithm by a number scaling as $\sqrt b$ both in one target item case \cite{GR05,KG06} and multiple target case \cite{CV07,CPB09}. This is the result of optimization \cite{Korepin05,CV07}. We can run the partial search algorithm on different set of partitions simultaneously (those are partitions popular with users).

Number of target blocks is denoted by $t$. In this paper, we only consider $t<K/4$ (which is worthwhile for partial search algorithm). When $t\geq K/4$, we can randomly pick up one block and run the full quantum search algorithm for unknown number of marked items proposed in \cite{BBHT96}. For one block, we need at most $4\sqrt b$ number of queries to the oracle to find one of the marked items or to claim that there is no target items in this block (with trivial small probabilities of failure). If no marked items in this block was found, we pick up another block and run the search again. The total expected number of Grover iterations is therefore
\eq
j_{t\geq N/4}\leq\sum_{l=0}^{\infty}\left(\frac 3 4\right)^l4\sqrt b=16\sqrt b\sim\mathcal O(\sqrt b)
\en
However, when we do not know the number of target blocks, we may consider the partial search algorithm for unknown number of marked blocks and unknown number of marked items in future.

This paper has two parts. In first part, we study the partial search algorithm for uneven distribution of multiple target items. Then we optimize the algorithm in large block limit $b\rightarrow\infty$. In second part, we study the partial search algorithm for uneven distribution of multiple target items by perturbation method (perturbation from even distribution). We find that more queries is needed for uneven distribution in second order perturbation proportional to the variance of target distribution in blocks. For reader's convenience, a summary of notations is in Table \ref{table 1}.

\begin{table}[!htbp]
\caption{Summary of notations.}
\label{table 1}
\scriptsize
\begin{center}
  \setlength{\extrarowheight}{4pt}
  \begin{tabular}{c|c|c}\hline\hline
  Notation & Definition & Remarks \\ \hline
  $N$ & total number of items in the database & As $b\rightarrow\infty$, we have $N\rightarrow\infty$.\\ \hline
  $n$ & number of bits necessary to represent the database ($N$ items) & $n=\log_2 N$ \\ \hline
  $K$ & total number of blocks in the database & each block has same number of items\\ \hline
  $b$ & number of items in each block & $N=bK$, \quad\quad $b\rightarrow\infty$ \\ \hline
  $z$ & total number of target items in the database & target item also known as marked item\\ \hline
  $t$ & number of target blocks in the database & $t<K/4$\\ \hline
  $\beta$ & ratio between $t$ and $K$ & $\beta=t/K$ \\ \hline
  $\tau_i$ &  number of marked items in $i$-th marked block & $z=\sum_i \tau_i$,\quad\quad $i=1,2,\cdots t$ \\ \hline
  $\bar\tau$ & average number of marked items in a marked block & $\bar\tau=z/t$  \\ \hline
  $\varepsilon_i$ & $\tau_i$ is perturbation of $\bar\tau$ & $\tau_i=\bar\tau(1+\varepsilon_i)$, \quad\quad $\varepsilon_i$ is small \\ \hline
  $\delta^2(\tau)$ & variance of $\tau_i$ & $\delta^2(\tau)=\left(\bar\tau^2 /t\right)\sum_{i=1}^t \varepsilon_i^2$ \\ \hline
  $A$ & set of all target items in the database& number of elements in $A$ is $z$  \\ \hline
  $A_i$ & set of all target items in $i$-th target block & number of elements in $A_i$ is $\tau_i$  \\ \hline
  $X$ & set of all non-target items in the database & number of elements in $X$ is $(N-z)$  \\ \hline
  $X_i$ & set of all non-target items in $i$-th target block & number of elements in $X_i$ is $(b-\tau_i)$  \\ \hline
  $\hat G_1$ & global Grover iteration & refer to formulae (\ref{def G1})-(\ref{define I s1}) \\ \hline
  $\hat G_2$ & local Grover iteration & see equations (\ref{def G local})-(\ref{def s2})\\ \hline
  $j_1$ & number of global Grover iterations & $j_1\propto \sqrt N\rightarrow\infty$ \\ \hline
  $j_2$ & number of local Grover iterations & $j_2\propto \sqrt b\rightarrow\infty$\\ \hline
  $\hat I$ & identity operator & \\ \hline
  $|s_1\rangle$ & average of all items in the database & $|s_1\rangle=1/\sqrt N\sum_{y=0}^{N-1}|y\rangle$ \\ \hline
  $|s_2\rangle$ & average of all items in a block & $|s_2\rangle=1/\sqrt b\sum_{y\in\text{one block}}|y\rangle$ \\ \hline
  $\hat I_{s_1}$  & reflection about the state $|s_1\rangle$ & $\hat I_{s_1}=\hat I-2|s_1\rangle\langle s_1|$; see equation (\ref{define I s1})\\ \hline
  $\hat I_{s_2}$ & reflection about the state $|s_2\rangle$ & $ \hat I_{s_2}=\hat I-2|s_2\rangle\langle s_2|$; see equation (\ref{def I s2})\\ \hline
  $\hat I_T$ & reflection about all the target states & $\hat I_T=\hat I-2\sum_{t\in A}^z|t\rangle\langle t|$; see equation (\ref{define I T}) \\ \hline
  $\theta$ & angle of global Grover iteration & $\sin^2\theta=z/N$; see equation (\ref{def theta}) \\ \hline
  $\theta_i$ & angles of local Grover iterations in $i$-th target blocks & $\sin^2\theta_i=\tau_i/b$; see equation (\ref{def theta q})\\ \hline
  $|t_i\rangle$ & average of target items in $i$-th target blocks & $|t_i\rangle=1/\sqrt {\tau_i}\sum_{t\in A_i}^{\tau_i}|t\rangle$ \\ \hline
  $|ntt_i\rangle$ & average of non-target items in $i$-th target blocks & $|ntt_i\rangle=1/\sqrt{b-\tau_i}\sum_{y\in X_i}^{b-\tau_i}|y\rangle$ \\ \hline
  $|u\rangle$ & average of all items in non-target blocks & see equation (\ref{def nt}) \\ \hline
  $\eta$ & \multirow{2}{*}{defined by the formula on the right side} & $j_1=\frac \pi 4 \sqrt{N/z}-\eta\sqrt b$ \\ \cline{1-1}\cline{3-3}
  $\alpha$ &   &  $j_2=\alpha\sqrt b$ \\ \hline
  $\eta_0$ & optimal value of $\eta$ in even distribution case &  \multirow{2}{*}{see equation (\ref{eta0 alpha0})} \\ \cline{1-2}
  $\alpha_0$ & optimal value of $\alpha$ in even distribution case & \\ \hline
  $\eta_K$ & optimal value of $\eta$ in uneven distribution case & determined by cancellation equation (\ref{eqa cancellation limit different}) \\ \hline
  $\alpha_K$ & optimal value of $\alpha$ in uneven distribution case & determined by optimization condition (\ref{opti relation}) \\ \hline
  $f(\eta,\alpha)$ & function to maximize in order to minimize the number of iterations & $f=\eta-\alpha$
  \\ \hline\hline
  \end{tabular}
\end{center}
\end{table}

\section{Partial search algorithm for uneven distribution of target items}

\label{section algorithm}

In this section, we study the partial search algorithm of uneven distribution of target items in details. We use two kinds of Grover iterations: global and local Grover iterations. The number of global and local Grover iterations should satisfy certain constraint called cancellation equation. We study optimization of the algorithm under such constraint.

\subsection{Steps of the algorithm}

\label{subsection steps}

Assume that the total number of items in the database is a power of 2, i.e., $N=2^n$. We can construct the uniform superposition of all basis vectors fast and efficiently by applying the Hadamard gate $H$ \cite{NC11}:
\eq
\label{def s1}
|s_1\rangle=H^{\otimes n}|0\rangle=\frac 1 {\sqrt N}\sum_{y=0}^{N-1}|y\rangle,\quad\quad\quad \langle s_1| s_1\rangle=1
\en
Here $|y\rangle$ is an element of orthonormal basis. The database is represented by an $N$ dimensional Hilbert space. The steps of partial search algorithm are listed below:

\begin{enumerate}[\bfseries Step 1.]
  \item $j_1$ global Grover iterations $\hat G_1$ defined as
  \eq
  \label{def G1}
  \hat G_1=-\hat I_{s_1}\hat I_T
  \en
  The operator
  \eq
  \label{define I T}
  \hat I_T=\hat I-2\sum_{t\in A}^z|t\rangle\langle t|
  \en
  is a reflection in a plane perpendicular to all target items. Here $z$ is the total number of target items in the database; $A$ is the set of all target items and $\hat I$ is the identity operator.  The operator
  \eq
  \label{define I s1}
  \hat I_{s_1}=\hat I-2|s_1\rangle\langle s_1|
  \en
  is a reflection in a plane perpendicular to the average of items. Let us explain reflection in the average. For example, for an arbitrary vector $|v\rangle$
  \eq
  |v\rangle=\sum_{y=0}^{N-1}a_y|y\rangle
  \en
  Here $a_y$ is a complex number. The operator $-\hat I_{s_1}$ acts as
  \eq
  \label{eqa Is1 reflect}
  -\hat I_{s_1}|v\rangle =\sum^{N-1}_{y=0}\breve{a}_y|y\rangle,\quad\quad\quad \breve{a}_y=2\bar{a}-a_y,\quad\quad\quad \bar a=\sum_{y=0}^{N-1}\frac {a_y}{N}
  \en
  Reflection operators can also be viewed as rotations. The rotation angle of global Grover iteration $\hat G_1$ (\ref{def G1}) is
  \eq
  \label{def theta}
  \sin^2\theta=\frac{z}{N}
  \en

  \item $j_2$ local Grover iterations $\hat G_2$. The local iteration is defined by
  \eq
  \label{def G local}
  \hat G_2=-\left(\bigoplus_\text{blocks}^K\hat I_{s_2}\right)\hat I_{T}
  \en
  The local operator $\hat I_{s_2}$ is given by
  \eq
  \label{def I s2}
  \hat I_{s_2}=\hat I-2|s_2\rangle\langle s_2|
  \en
  Here
  \eq
  \label{def s2}
  |s_2\rangle=\frac 1 {\sqrt b} \sum_{\hspace{1mm}y\in \text{one block}}|y\rangle
  \en
  is average of all items in a block. Local Grover iteration is usual Grover iteration for a block (considered as a database). Therefore, one can run the local Grover iteration on the same hardware which is used for global Grover iteration. Direct sum means that we run local search in each block simultaneously. The rotation angle for local Grover iteration $\hat G_2$ (\ref{def G local}) is
  \eq
  \label{def theta q}
  \sin^2\theta_i=\frac{\tau_i}{b}
  \en

  \item One last reflection $\hat I_{s_1}$ (\ref{define I s1}) vanishes amplitudes of all items in non-target blocks.

  \item Measurement will reveal a target block with high probability.
\end{enumerate}

During the algorithm, in a target block, amplitudes of all target items are the same. So we can only follow the amplitude of the average of all target items in one block:
\eq
\label{def Tq}
|t_i\rangle=\frac{1}{\sqrt {\tau_i}}\sum_{t\in A_i}^{\tau_i}|t\rangle
\en
It is the normalized sum of all target items in $i$-th target block. The set of all target items in $i$-th target block is denoted as $A_i$. Also the amplitudes of non-target items in one target block are the same. Therefore we define the normalized sum of all non-target items in $i$-th target block:
\eq
\label{def ntt}
|ntt_i\rangle=\frac 1 {\sqrt{b-\tau_i}}\sum_{y\in X_i}^{b-\tau_i}|y\rangle
\en
Here the set of all non-target items in $i$-th target block is denoted as $X_i$. The normalized sum of items in all non-target block is denoted by $|u\rangle$:
\eq
\label{def nt}
|u\rangle=\frac 1 {\sqrt{N-bt}}\sum_{\hspace{1mm}y\in\text{non-target blocks} }^{N-bt}|y\rangle,
\en
where $y\in \left(X-\sum_iX_i\right)$ and $X$ is the set of all non-target items in the database. In conclusion, the algorithm is a representation of $SO(2t+1)$ group. Because of local Grover iteration $\hat G_2$ (\ref{def G local}) acting locally on the blocks, therefore, in the bases $|t_i\rangle$, $|ntt_i\rangle$ and $|u\rangle$, $\hat G_2$ has the block diagonal form:
      \eq
      \label{G2 matrix}
      \hat G^{j_2}_{2}=\left(
        \begin{array}{cccccccc}
          \cos(2j_2\theta_1) & \sin(2j_2\theta_1)  & 0 & 0 & \cdots  & 0 & 0 & 0 \\
          -\sin(2j_2\theta_1)  & \cos(2j_2\theta_1)  & 0 & 0 & \cdots  & 0 & 0 & 0 \\
          0 & 0 & \cos(2j_2\theta_2) & \sin(2j_2\theta_2) & \cdots  & 0 & 0 & 0 \\
          0 & 0 & -\sin(2j_2\theta_2)  & \cos(2j_2\theta_2) & \cdots  & 0 & 0 & 0 \\
          \vdots & \vdots & \vdots & \vdots & \ddots & \vdots & \vdots & \vdots \\
          0 & 0 & 0 & 0 & \cdots & \cos(2j_2\theta_t) & \sin(2j_2\theta_t) & 0 \\
          0 & 0 & 0 & 0 & \cdots & -\sin(2j_2\theta_t) & \cos(2j_2\theta_t) & 0 \\
          0 & 0 & 0 & 0 & \cdots & 0 & 0 & 1 \\
        \end{array}
      \right),
  \en
which is a $(2t+1)\times (2t+1)$ orthogonal matrix. Note that local iteration $\hat G_2$ acts trivially on non-target blocks.

\subsection{Cancellation equation}

\label{subsection cancellation}

After \textbf{Step 1} ($j_1$ global Grover iterations $\hat G_1$), we get the state of the database:
\eqa
\hat G_1^{j_1}|s_1\rangle&=&\frac{\sin\left(\left(2j_1+1\right)\theta\right)}{\sqrt{z}}   \sum_{t\in A}|t\rangle+\frac{\cos\left(\left(2j_1+1\right)\theta\right)}{\sqrt {N-z}}\sum_{y\in X }|y\rangle \\
&=& \frac{\sin\left(\left(2j_1+1\right)\theta\right)}{\sqrt{z}}\sum_{i=1}^t\sqrt{\tau_i}|t_i\rangle+\frac{\cos\left(\left(2j_1+1\right)\theta\right)}{\sqrt {N-z}}\sum_{i=1}^t\sqrt{b-\tau_i}|ntt_i\rangle\nonumber\\
&&+\cos\left(\left(2j_1+1\right)\theta\right)\sqrt{\frac{b(K-t)}{N-z}}|u\rangle\nonumber
\ena
Here we rewrite the result in the bases $|t_i\rangle$ (\ref{def Tq}), $|ntt_i\rangle$ (\ref{def ntt}) and $|u\rangle$ (\ref{def nt}). The amplitude of all the non-target block states is $a_{nt}$:
\eq
\label{def ant}
a_{nt}=\frac{\cos\left(\left(2j_1+1\right)\theta\right)}{\sqrt{N-z}}
\en
It will remain unchanged during the \textbf{Step 2}.

In order to calculate $\hat G_2^{j_2}\hat G_1^{j_1}|s_1\rangle$, we can directly apply the matrix formalism of $\hat G_2^{j_2}$ (\ref{G2 matrix}) on the state $\hat G_1^{j_1}|s_1\rangle$. Then we have
\eq
\label{eqa G2G1 state}
\hat G_2^{j_2}\hat G_1^{j_1}|s_1\rangle=\sum_{i=1}^ta_{t_i}|t_i\rangle+\sum_{i=1}^ta_{ntt_i}|ntt_i\rangle+a_{nt}\sqrt{(b(K-t))}|u\rangle
\en
with
\eqa
\label{a t i}a_{t_i} &=& \sqrt{\frac{\tau_i}{z}}\cos\left(2j_2\theta_i\right)\sin\left(\left(2j_1+1\right)\theta\right)
+\left(\sqrt{\frac{b-\tau_i}{N-z}}\right)\sin\left(2j_2\theta_i\right)\cos\left(\left(2j_1+1\right)\theta\right); \\
\label{a ntt i}a_{ntt_i} &=& -\sqrt{\frac{\tau_i}{z}}\sin\left(2j_2\theta_i\right)\sin\left(\left(2j_1+1\right)\theta\right)
+\left(\sqrt{\frac{b-\tau_i}{N-z}}\right)\cos\left(2j_2\theta_i\right)\cos\left(\left(2j_1+1\right)\theta\right)
\ena
The vector (\ref{eqa G2G1 state}) describes the state of the database after \textbf{Step 2}.

After \textbf{Step 3}, the amplitudes of states in non-target block should vanish. Specifically, the amplitude $a_{nt}$ should be twice of the average, i.e., $a_{nt}=2\bar a$, because the operator $-\hat I_{s_1}$ inverts the amplitudes about the average, see equation (\ref{eqa Is1 reflect}). Then we have the constraint relation
\eq
a_{nt}=\frac 2 N\left(b(K-t)a_{nt}+\sum_{i=1}^t\left(a_{t_i}\sqrt{\tau_i}+a_{ntt_i}\sqrt{b-\tau_i}\right)\right),
\en
which leads to
\eq
N\left(\frac t K-\frac 1 2\right)a_{nt}=\sum_{i=1}^t\left(a_{t_i}\sqrt{\tau_i}+a_{ntt_i}\sqrt{b-\tau_i}\right)
\en
Substituting $a_{nt}$ (\ref{def ant}), $a_{t_i}$ (\ref{a t i}) and $a_{ntt_i}$ (\ref{a ntt i}) into above relation, we have
\eqa
\label{eqa cancellation different}
&&\frac N {\sqrt{N-z}}\left(\frac t K-\frac 1 2\right)\cos\left((2j_1+1)\theta\right) \\
&=&\sum_{i=1}^t\left(\frac{\tau_i}{\sqrt z}\cos\left(2j_2\theta_i\right)\sin\left((2j_1+1)\theta\right)+\sqrt{\frac{\tau_i(b-\tau_i)}{N-z}}\sin\left(2j_2\theta_i\right)\cos\left((2j_1+1)\theta\right)\right.\nonumber\\
&&\left.-\sqrt{\frac{\tau_i(b-\tau_i)}{z}}\sin\left(2j_2\theta_i\right)\sin\left((2j_1+1)\theta\right)+\frac{b-\tau_i}{\sqrt{N-z}}\cos\left(2j_2\theta_i\right)\cos\left((2j_1+1)\theta\right)\right)\nonumber
\ena
which is called \textbf{cancellation equation}. If above relation holds, the amplitudes of items in non-target block will vanish. And the final state is
\eq
\label{eqa final state}
\hat I_{s_1}\hat G_2^{j_2}\hat G_1^{j_1}|s_1\rangle=\sum_{i=1}^t \left(\left(a_{t_i}-a_{nt}\sqrt{\tau_i}\right)|t_i\rangle+\left(a_{ntt_i}-a_{nt}\sqrt{b-\tau_i}\right)|ntt_i\rangle\right)
\en
So the measurement will reveal a target block.

\subsection{Large block limit: $b\rightarrow\infty$}

\label{subsection large block}

The numbers of iterations $j_1$ and $j_2$ usually scale as \cite{Korepin05,Korepin06}
\eq
\label{def eta alpha}
j_1=\frac \pi 4 \sqrt{\frac{N}{z}}-\eta\sqrt b,\quad\quad\quad j_2=\alpha\sqrt b; \quad\quad\quad\eta>0,\quad\quad\quad\alpha>0,
\en
when $N\rightarrow\infty$. In thermodynamics limit $b\rightarrow \infty$, the cancellation equation (\ref{eqa cancellation different}) reduces into
\eq
\label{eqa cancellation limit different before}
\left(\frac t {\sqrt K}-\frac{\sqrt K}{2}\right)\sin\left(2\eta\sqrt{\frac z K}\right)
=\sum_{i=1}^t\left(\frac 1 {\sqrt K}\cos\left(2\alpha\sqrt{\tau_i}\right)\sin\left(2\eta\sqrt{\frac z K}\right)-\sqrt{\frac{\tau_i}{z}}\sin\left(2\alpha\sqrt{\tau_i}\right)\cos\left(2\eta\sqrt{\frac z K}\right)\right),
\en
which has a simpler expression for $\eta$
\eq
\label{eqa cancellation limit different}
\tan\left(2\eta\sqrt{\frac z K}\right)=\frac{2\sqrt K\sum_{i=1}^t\sqrt{\tau_i}\sin\left(2\alpha\sqrt{\tau_i}\right)}{\sqrt z\left(K-4\sum_{i=1}^t\sin^2\left(\alpha\sqrt\tau_i\right)\right)}
\en
Note the denominator on RHS is always positive if $t<K/4$.

\subsection{Optimization}

\label{subsection optimization}

We use the number of queries to the oracle as complexity measure of the algorithm. In order to accelerate the algorithm, we have to optimize. We want to minimize total number of queries to the oracle given by $j_1+j_2$:
\eq
\label{def total queries}
j_1+j_2=\frac \pi 4 \sqrt{\frac N z}-(\eta-\alpha)\sqrt b
\en
Therefore we want to maximize the function $f(\eta,\alpha)=\eta-\alpha$. By Lagrange multiplier method, we construct the function
\eq
\label{Lagrangian}
\mathcal L(\eta,\alpha,\lambda)=f(\eta,\alpha)-\lambda\left(\sqrt z\tan\left(2\eta\sqrt{\frac z K}\right)\left(2\sum_{i=1}^t\cos\left(2\alpha\sqrt{\tau_i}\right)+K-2t\right)-2\sqrt K\sum_{i=1}^t \sqrt{\tau_i} \sin\left(2\alpha\sqrt{\tau_i}\right)       \right)
\en
Here $\lambda$ is the Lagrangian multiplier. Maximization of $\mathcal L(\eta,\alpha,\lambda)$ leads to the equations
\eqa
0&=& 1-\lambda\left(\frac{2z}{\sqrt K \cos^2\left(2\eta\sqrt{\frac z K}\right)}\left(2\sum_{i=1}^t\cos\left(2\sqrt{\tau_i}\alpha\right)+K-2t     \right)\right); \\
0&=& -1+\lambda\left(4\sqrt z \tan\left(2\eta\sqrt{\frac z K}\right)\sum_{i=1}^t \sqrt{\tau_i}\sin\left(2\sqrt{\tau_i}\alpha\right)+4\sqrt K\sum_{i=1}^t\tau_i\cos\left(2\sqrt{\tau_i}\alpha\right)        \right); \\
0&=& \sqrt z\tan\left(2\eta\sqrt{\frac z K}\right)\left(2\sum_{i=1}^t\cos\left(2\sqrt{\tau_i} \alpha\right)+K-2t\right)-2\sqrt K\sum_{i=1}^t \sqrt{\tau_i} \sin\left(2\sqrt{\tau_i} \alpha\right)
\ena
We combine these equations to eliminate Lagrangian multiplier $\lambda$ and $\eta$ and get one equation for $\alpha$:
\eq
\left(2\left(\sum_{i=1}^t\cos\left(2\alpha\sqrt{\tau_i}\right) \right)+K-2t\right)\left(2\left(\sum_{i=1}^t\left(K\tau_i-z\right)\cos\left(2\alpha\sqrt{\tau_i}\right) \right)-z(K-2t)   \right)=0
\en
Note that $2\left(\sum_{i=1}^t\cos\left(2\alpha\sqrt{\tau_i}\right) \right)+K-2t>K-4t$. Besides, we only consider the case $(K-4t)>0$. Therefore, only second factor can vanish
\eq
\label{opti relation}
2\left(\sum_{i=1}^t\left(K\tau_i-z\right)\cos\left(2\alpha\sqrt{\tau_i}\right) \right)-z(K-2t) =0
\en
for maximum value of $f(\eta,\alpha)$. This is \textbf{optimization condition}. The solutions are denoted as $\eta_K$ and $\alpha_K$ respectively. Let us compare to the trivial even distribution case. When $\tau_i=\bar\tau$, we retrieve the optimal value for least number of iterations in even distribution database case \cite{CV07}. We denote the optimal values of $\alpha$ and $\eta$ in even distribution as $\alpha_0$ and $\eta_0$:
\eq
\label{eta0 alpha0}
\tan\left(2\eta_0\sqrt{\frac{z}{K}}\right)=\frac{\sqrt{3tK-4t^2}}{K-2t},\quad\quad\quad \cos\left(2\alpha_0\sqrt{\bar\tau}\right)=\frac{K-2t}{2(K-t)}
\en

\section{Perturbation}

\label{section prove}

Numerical results show that the more uniform the distribution of target items, the less queries to the oracle are needed \cite{CPB09}. In this section, we consider the uneven distribution of target items as the perturbation from even distribution, namely
\eq
\label{def varepsilon}
\tau_i=\bar\tau(1+\varepsilon_i),\quad\quad\quad \sum_{i=1}^t\varepsilon_i=0
\en
Perturbation $\varepsilon_i$ is a small number in the range $-1<\varepsilon_i<1$. Note that the variance of $\tau_i$ is
\eq
\label{def variance}
\delta^2(\tau)=\frac{\bar\tau^2}{t}\sum_{i=1}^t \varepsilon_i^2
\en
As the result, the optimal values $\eta_K$ and $\alpha_K$ change as
\eq
\label{def nabla}
\eta_K=\eta_0+\Delta\eta_K,\quad\quad\quad \alpha_K=\alpha_0+\Delta\alpha_K
\en

\subsection{The limit of many blocks: $K\rightarrow\infty$}

\label{subsection many block}

\begin{theorem}
\label{theorem 1}
In the limit of many blocks ($K\rightarrow\infty$), \textbf{uneven distribution requires more queries to the oracle than even distribution by}
\eq
\label{eqa theorem 1}
f(\eta_0,\alpha_0)-f(\eta_K,\alpha_K)=\left(\frac{\sqrt 3 \pi^2-3\pi+9\sqrt 3}{144}\right)\frac{\delta^2(\tau)}{\bar\tau^{5/2}}\approx 0.1615\frac{\delta^2(\tau)}{\bar\tau^{5/2}}
\en
\textbf{in second order perturbation}.
\end{theorem}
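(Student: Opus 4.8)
The plan is to first simplify the two governing relations in the regime of the theorem and only then expand perturbatively. Throughout I take $K\to\infty$ with $t$ and $\bar\tau$ held fixed, so that $\beta=t/K\to0$ and $z=\sum_i\tau_i=t\bar\tau$ stays fixed (it is unchanged by the perturbation because $\sum_i\varepsilon_i=0$). In this regime the optimization condition (\ref{opti relation}), after dividing by $K$ and discarding the terms $z/K$ and $t/K$, collapses to
\eq
\sum_{i=1}^t\tau_i\left(2\cos\left(2\alpha\sqrt{\tau_i}\right)-1\right)=0,
\en
while the cancellation equation (\ref{eqa cancellation limit different}), whose argument $2\eta\sqrt{z/K}$ is small and whose denominator tends to $\sqrt z\,K$, reduces to
\eq
\eta=\frac 1 z\sum_{i=1}^t\sqrt{\tau_i}\sin\left(2\alpha\sqrt{\tau_i}\right).
\en
Setting $\tau_i=\bar\tau$ in these two relations immediately returns $2\alpha_0\sqrt{\bar\tau}=\pi/3$ (so $\cos(2\alpha_0\sqrt{\bar\tau})=1/2$) and $\eta_0=\sqrt{3}/(2\sqrt{\bar\tau})$, consistent with the $K\to\infty$ limit of (\ref{eta0 alpha0}). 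The whole computation thus reduces to tracking how the solution of these two scalar equations moves when $\tau_i=\bar\tau(1+\varepsilon_i)$.

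Next I would introduce the single expansion variable $u=2\alpha\sqrt{\bar\tau}$ with unperturbed value $u_0=\pi/3$, write $\alpha=\alpha_0+\Delta\alpha$ and $u=u_0+\Delta u$, and expand $\sqrt{\tau_i}=\sqrt{\bar\tau}(1+\tfrac12\varepsilon_i-\tfrac18\varepsilon_i^2+\cdots)$. Then $2\alpha\sqrt{\tau_i}=u_0+\delta_i$ with $\delta_i=\tfrac{u_0}{2}\varepsilon_i$ at first order and $\delta_i=\Delta u-\tfrac{u_0}{8}\varepsilon_i^2$ at second order (the cross term $\tfrac12\Delta u\,\varepsilon_i$ being third order). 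Expanding $\cos(u_0+\delta_i)$ and $\sin(u_0+\delta_i)$ to second order and summing, the constraint $\sum_i\varepsilon_i=0$ kills every first-order contribution; this is precisely what forces $\Delta\alpha$ and $\Delta\eta$ to be genuinely second order, i.e. proportional to $S:=\sum_i\varepsilon_i^2$.

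I would then solve the reduced optimization condition order by order. The second-order balance yields $t\,\Delta u$ as an explicit multiple of $S$ (using $\cos u_0=1/2$, $\sin u_0=\sqrt{3}/2$), hence $\Delta\alpha=\Delta u/(2\sqrt{\bar\tau})$. Feeding $\Delta u$ into the reduced $\eta$-relation and expanding to second order gives $\Delta\eta$, again a multiple of $S/(t\sqrt{\bar\tau})$. Finally $f(\eta_0,\alpha_0)-f(\eta_K,\alpha_K)=-(\Delta\eta-\Delta\alpha)$; collecting the $\pi^0$, $\pi^1$ and $\pi^2$ pieces over the common denominator $144$ produces $(\sqrt{3}\,\pi^2-3\pi+9\sqrt{3})/144$, and substituting $S/t=\delta^2(\tau)/\bar\tau^2$ from (\ref{def variance}) converts the prefactor $S/(t\sqrt{\bar\tau})$ into $\delta^2(\tau)/\bar\tau^{5/2}$, which is exactly (\ref{eqa theorem 1}).

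The main obstacle is the bookkeeping of the second-order expansion: one must retain simultaneously the quadratic term of $\sqrt{1+\varepsilon_i}$, the curvature terms from $\cos$ and $\sin$, and the second-order shift $\Delta u$, while correctly discarding the third-order cross terms such as $\Delta u\,\varepsilon_i$; a single misallocated term spoils the final combination of $\pi$-powers. A secondary point requiring care is the legitimacy of taking $K\to\infty$ before perturbing: one should verify that the neglected $O(t/K)$ and $O(z/K)$ pieces of (\ref{opti relation}) and (\ref{eqa cancellation limit different}) do not contribute at the order $S$ we retain, so that the two limits genuinely commute.
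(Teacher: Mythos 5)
Your proposal is correct and follows essentially the same route as the paper: pass to the large-$K$ form of the cancellation equation, $\eta=\frac 1 z\sum_{i=1}^t\sqrt{\tau_i}\sin\left(2\alpha\sqrt{\tau_i}\right)$ (the paper's equation (\ref{large K cancellation})), expand to second order in $\varepsilon_i$ around $2\alpha_0\sqrt{\bar\tau}=\pi/3$ using $\sqrt{\tau_i}\approx\sqrt{\bar\tau}\left(1+\tfrac12\varepsilon_i-\tfrac18\varepsilon_i^2\right)$ with $\sum_i\varepsilon_i=0$ killing first-order terms, and read off $f(\eta_0,\alpha_0)-f(\eta_K,\alpha_K)=\Delta\alpha_K-\Delta\eta_K$. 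The one redundancy in your plan is solving the reduced optimization condition for $\Delta u$ explicitly: since $2\cos\left(2\alpha_0\sqrt{\bar\tau}\right)=1$, the $\Delta\alpha_K$ contribution cancels identically in $\Delta\eta_K-\Delta\alpha_K$ (an envelope-type effect of optimality), which is exactly how the paper's proof obtains (\ref{eqa theorem 1}) from the cancellation equation alone, without ever computing $\Delta\alpha_K$.
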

\begin{proof}
We notice that $\Delta \eta_K$ and $\Delta \alpha_K$ are nonzero at least in the second order of $\varepsilon_i$, because
\eq
\sum_{i=1}^t\varepsilon_i=0
\en
In following calculations, we keep the second order of $\varepsilon_i$, first order of $\Delta \eta_K$ and $\Delta \alpha_K$. In the limit $K\rightarrow\infty$, cancellation equation (\ref{eqa cancellation limit different}) simplifies as
\eq
\label{large K cancellation}
\eta=\frac 1 {z}\sum_{i=1}^t\sqrt{\tau_i}\sin\left(2\alpha\sqrt{\tau_i}\right)
\en
In second order of $\varepsilon_i$, we have
\eq
\sqrt{\tau_i}\approx\sqrt{\bar\tau}\left(1+\frac 1 2 \varepsilon_i-\frac 1 8 \varepsilon_i^2\right),
\en
which gives rise to
\begin{multline}
\label{eqa sin expand}
\sum_{i=1}^t\sqrt{\tau_i}\sin\left(2\alpha_K\sqrt{\tau_i}\right)=t\sqrt{\bar\tau}\sin\left(2\alpha_0\sqrt{\bar\tau}\right)+2t\bar\tau\cos\left(2\alpha_0\sqrt{\bar\tau}\right)\Delta\alpha_K\\
-\left(\frac t {2\sqrt{\bar\tau}}\alpha_0^2\sin\left(2\alpha_0\sqrt{\bar\tau}\right)
-\frac{t}{4\bar\tau}\alpha_0\cos\left(2\alpha_0\sqrt{\bar\tau}\right)
+\frac{t}{8\bar\tau\sqrt{\bar\tau}}\sin\left(2\alpha_0\sqrt{\bar\tau}\right)\right)\delta^2(\tau)
\end{multline}
Substituting above relation into (\ref{large K cancellation}), we get
\begin{multline}\label{large K second order}
\eta_0+\Delta\eta_K=\frac 1 {\sqrt{\bar\tau}}\sin\left(2\alpha_0\sqrt{\bar\tau} \right)+2\cos\left(2\alpha_0\sqrt{\bar\tau} \right)\Delta\alpha_K\\
+\left(-\frac{\alpha_0^2}{2\bar\tau\sqrt{\bar\tau}}\sin\left(2\alpha_0\sqrt{\bar\tau} \right)+\frac {\alpha_0} {4\bar\tau^2}\cos\left(2\alpha_0\sqrt{\bar\tau} \right)-\frac 1 {8\bar\tau^2\sqrt{\bar\tau}}\sin\left(2\alpha_0\sqrt{\bar\tau} \right)\right)\delta^2(\tau)
\end{multline}
On the other hand, from (\ref{large K cancellation}), in the case of even distribution, we know
\eq
\eta_0=\frac 1 {\sqrt{\bar\tau}}\sin\left(2\alpha_0\sqrt{\bar\tau} \right)
\en
And in the limit $K\rightarrow\infty$, parameter $\alpha_0$ (\ref{eta0 alpha0}) is
\eq
\alpha_0=\frac \pi   {6\sqrt {\bar\tau}}
\en
with
\eq
\sin\left(2\alpha_0\sqrt{\bar\tau} \right)=\frac {\sqrt 3}{2}, \quad \quad\quad \cos\left(2\alpha_0\sqrt{\bar\tau} \right)=\frac 1 2
\en
Substituting above relations into (\ref{large K second order}), we get
\eq
\Delta\eta_K=\Delta\alpha_K-\left(\frac{\sqrt 3 \pi^2-3\pi+9\sqrt 3}{144}\right)\frac{\delta^2(\tau)}{\bar\tau^{5/2}}
\en
Remember that $f=\eta-\alpha$, then
\eq
\begin{split}
f(\eta_0,\alpha_0)-f(\eta_K,\alpha_K)&=(\eta_0-\alpha_0)-(\eta_K-\alpha_K)=-\Delta\eta_K+\Delta\alpha_K\\
&=\left(\frac{\sqrt 3 \pi^2-3\pi+9\sqrt 3}{144}\right)\frac{\delta^2(\tau)}{\bar\tau^{5/2}}\approx 0.1615\frac{\delta^2(\tau)}{\bar\tau^{5/2}}
\end{split}
\en
\textbf{So we proved Theorem \ref{theorem 1}}.
\end{proof}

\subsection{Finite number of blocks}

\label{subsection finte block}

\begin{theorem}
\label{theorem 2}
Consider finite number of blocks (number of blocks is $K$). In second order of perturbation around even distribution, \textbf{uneven distribution of target items requires more queries then even distribution by}
\eq
\label{theorem ineqa}
  f(\eta_0,\alpha_0)-f(\eta_K,\alpha_K)>g(\beta)\frac{\delta^2{(\tau)}}{\bar\tau^{5/2}}
  \en
  \textbf{with} $\beta=t/K$ \textbf{and}
  \eq
  \label{def g beta}
  g(\beta)=\frac{\sqrt{3-4\beta}(1-2\beta)(\pi^2(1-\beta)+9)+3\pi(-8\beta^2+7\beta-1)}{144(1-\beta)}
  \en
 When $0<\beta<\beta_c$, function $g(\beta)$ is positive. Here $\beta_c$ is a root of polynomial of $5$ degree: $g(\beta_c)=0$. Approximate expression is $\beta_c\approx0.6281$.
\end{theorem}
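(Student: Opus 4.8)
The plan is to run the same second-order perturbation scheme as in Theorem \ref{theorem 1}, but now keeping the \emph{full} finite-$K$ cancellation equation (\ref{eqa cancellation limit different}) together with the optimization condition (\ref{opti relation}). The reason both are needed here is the following. In the $K\to\infty$ proof the special value $2\alpha_0\sqrt{\bar\tau}=\pi/3$ gives $2\cos(2\alpha_0\sqrt{\bar\tau})=1$, so the $\Delta\alpha_K$ contributions cancel in $f(\eta_0,\alpha_0)-f(\eta_K,\alpha_K)=-\Delta\eta_K+\Delta\alpha_K$, and the (simplified) cancellation equation alone fixes the answer. At finite $K$ one has instead $\cos(2\alpha_0\sqrt{\bar\tau})=\tfrac{1-2\beta}{2(1-\beta)}\neq\tfrac12$, so the $\Delta\alpha_K$ terms no longer drop out and $\Delta\alpha_K$ must be pinned down independently, which is precisely the role of (\ref{opti relation}).

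First I would set $\tau_i=\bar\tau(1+\varepsilon_i)$, use $\sqrt{\tau_i}=\sqrt{\bar\tau}\,(1+\tfrac12\varepsilon_i-\tfrac18\varepsilon_i^2+\cdots)$, and recall $\sum_i\varepsilon_i=0$ together with $\sum_i\varepsilon_i^2=t\,\delta^2(\tau)/\bar\tau^2$, so that every symmetric sum collapses at second order to a multiple of $\delta^2(\tau)$. Expanding (\ref{opti relation}) about the even-distribution solution (\ref{eta0 alpha0}): the zeroth order reproduces $\cos(2\alpha_0\sqrt{\bar\tau})=\tfrac{K-2t}{2(K-t)}$, the first order vanishes by $\sum_i\varepsilon_i=0$ (confirming that $\Delta\eta_K,\Delta\alpha_K$ begin at second order), and the second order is linear in $\Delta\alpha_K$ and in $\delta^2(\tau)$, which I solve for $\Delta\alpha_K=c_\alpha(\beta)\,\delta^2(\tau)/\bar\tau^{5/2}$. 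I would then expand the cancellation equation (\ref{eqa cancellation limit different}) to second order, using $\sin(2\alpha_0\sqrt{\bar\tau})=\tfrac{\sqrt{3-4\beta}}{2(1-\beta)}$ and the $\eta_0$-data implied by (\ref{eta0 alpha0}), and substitute this $\Delta\alpha_K$ to solve for $\Delta\eta_K$.

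With both corrections in hand I would assemble $f(\eta_0,\alpha_0)-f(\eta_K,\alpha_K)=\Delta\alpha_K-\Delta\eta_K$ and simplify the $\beta$-dependent coefficient using the closed forms of $\sin(2\alpha_0\sqrt{\bar\tau})$ and $\cos(2\alpha_0\sqrt{\bar\tau})$ above; this should condense to $g(\beta)\,\delta^2(\tau)/\bar\tau^{5/2}$ with $g(\beta)$ as in (\ref{def g beta}). As a consistency check, letting $\beta\to0$ must recover the Theorem \ref{theorem 1} coefficient $\tfrac{\sqrt3\,\pi^2-3\pi+9\sqrt3}{144}$, which fixes the normalization. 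For the sign statement, the denominator $144(1-\beta)$ is positive for $\beta<1$, so positivity of $g$ is governed by its numerator; setting the numerator to zero and clearing the single factor $\sqrt{3-4\beta}$ yields a degree-$5$ polynomial in $\beta$ whose relevant root is the threshold $\beta_c\approx0.6281$, and $g(\beta)>0$ on $0<\beta<\beta_c$ gives the stated inequality.

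The hard part will be the second-order expansion of the finite-$K$ cancellation equation (\ref{eqa cancellation limit different}). Unlike the simple linear relation (\ref{large K cancellation}) available in Theorem \ref{theorem 1}, here both the $\tan$ on the left and the $\tau_i$-dependent denominator $K-4\sum_i\sin^2(\alpha\sqrt{\tau_i})$ on the right must be carried to second order, and the surviving $\Delta\alpha_K$ terms---no longer self-cancelling---have to be tracked and fused with the optimization-condition value of $\Delta\alpha_K$. Organizing this bookkeeping so that the bulky intermediate expressions collapse into the compact $g(\beta)$, and then confirming that the neglected higher-order-in-$\varepsilon$ contributions do not overturn the sign on $(0,\beta_c)$ so that a strict inequality (rather than a bare equality) is warranted, is where I expect the real work to lie.
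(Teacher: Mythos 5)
Your computational scheme coincides with the paper's proof: Step 1 there expands the optimization condition (\ref{opti relation}) to second order in $\varepsilon_i$ and solves for $\Delta\alpha_K$ (equation (\ref{tilde alpha})), Step 2 expands the finite-$K$ cancellation equation (\ref{eqa cancellation limit different}) to first order in $\Delta\eta_K$ and solves for it (equation (\ref{delta eta}), details in the Appendix), and Step 3 assembles $f(\eta_0,\alpha_0)-f(\eta_K,\alpha_K)=\Delta\alpha_K-\Delta\eta_K$. Your diagnosis of why the optimization condition becomes indispensable at finite $K$ — the coefficient $2\cos(2\alpha_0\sqrt{\bar\tau})=(1-2\beta)/(1-\beta)$ is no longer $1$, so the $\Delta\alpha_K$ contributions no longer self-cancel — is exactly right.

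The genuine gap is in your final step. You expect the assembled second-order coefficient to ``condense to $g(\beta)$''; it does not, and cannot. The exact result is
\begin{equation*}
\Delta\alpha_K-\Delta\eta_K=\left(\frac{(1-2\beta)\sqrt{3-4\beta}}{4}\,\frac{\alpha_0^2}{\bar\tau\sqrt{\bar\tau}}+\frac{-8\beta^2+7\beta-1}{8(1-\beta)}\,\frac{\alpha_0}{\bar\tau^2}+\frac{(1-2\beta)\sqrt{3-4\beta}}{16(1-\beta)}\,\frac{1}{\bar\tau^2\sqrt{\bar\tau}}\right)\delta^2(\tau),
\end{equation*}
in which $\alpha_0$ is known only implicitly through $\cos\left(2\alpha_0\sqrt{\bar\tau}\right)=(1-2\beta)/(2(1-\beta))$, so that $2\alpha_0\sqrt{\bar\tau}=\arccos\frac{1-2\beta}{2(1-\beta)}$ is a transcendental function of $\beta$; the algebraic expression $g(\beta)$ of (\ref{def g beta}) emerges only after invoking the bound $\alpha_0>\pi/(6\sqrt{\bar\tau})$ (strict for $\beta>0$, saturated as $\beta\to0$) to replace $\alpha_0$ in this coefficient — that is precisely the paper's inequality (\ref{f0 fk 2}), and it is the sole source of the strict inequality in (\ref{theorem ineqa}). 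Your attribution of the strictness to ``neglected higher-order-in-$\varepsilon$ contributions'' is therefore misplaced: those lie outside a theorem stated in second order of perturbation, and as written your plan would terminate with an $\alpha_0$-dependent coefficient that does not reduce to (\ref{def g beta}); your $\beta\to0$ consistency check would confirm only the limiting value $g(0)=(\sqrt3\pi^2-3\pi+9\sqrt3)/144$ rather than fix a normalization. (A minor further point: the degree-$5$ polynomial defining $\beta_c$ arises by isolating the $\sqrt{3-4\beta}$ term in the numerator of $g$ and squaring, not merely by ``clearing'' the radical.)
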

\begin{proof} The proof is in four steps.
\begin{enumerate}[\bfseries Step 1.]
  \item Find perturbation $\Delta\alpha_K$ (\ref{def nabla}) in first order of $\delta^2(\tau)$ (\ref{def variance}). Firstly, the optimization condition (\ref{opti relation}) for uneven distribution can be rewritten as
\eq
\label{opti uneven}
\sum_{i=1}^t2K\tau_i\cos\left(2\alpha_K\sqrt{\tau_i}\right)-2t\bar\tau\sum_{i=1}^t\cos\left(2\alpha_K\sqrt{\tau_i}\right)-z(K-2t)=0
\en
In the case of even distribution, optimization condition (\ref{opti relation}) has the simpler form
\eq
\label{opti even}
2tK\bar\tau\cos\left(2\alpha_0\sqrt{\bar\tau}\right)-2t^2\bar\tau\cos\left(2\alpha_0\sqrt{\bar\tau}\right)-z(K-2t)=0
\en
Subtracting equations (\ref{opti uneven}) and (\ref{opti even}), we get
\eq
\label{even equa uneven}
\sum_{i=1}^t2K\tau_i\cos\left(2\alpha_K\sqrt{\tau_i}\right)-2t\bar\tau\sum_{i=1}^t\cos\left(2\alpha_K\sqrt{\tau_i}\right)=2tK\bar\tau\cos\left(2\alpha_0\sqrt{\bar\tau}\right)-2t^2\bar\tau\cos\left(2\alpha_0\sqrt{\bar\tau}\right)
\en
Next, in second order of $\varepsilon_i$ (\ref{def varepsilon}), we expand the following expressions
\begin{multline}
\label{eqa cos}
\sum_{i=1}^t \cos\left(2\alpha_K\sqrt{\tau_i}\right)=t\cos\left(2\alpha_0\sqrt{\bar\tau}\right)-2t\sqrt{\bar\tau}\sin\left(2\alpha_0\sqrt{\bar\tau}\right)\Delta\alpha_K \\
+\left(\frac t {4\bar\tau\sqrt{\bar\tau}}\alpha_0\sin\left(2\alpha_0\sqrt{\bar\tau}\right)-\frac t {2\bar\tau}\alpha_0^2\cos\left(2\alpha_0\sqrt{\bar\tau}\right)\right)\delta^2(\tau);
\end{multline}
\vspace{-7mm}
\begin{multline}
\sum_{i=1}^t \tau_i\cos\left(2\alpha_K\sqrt{\tau_i}\right)=t\bar\tau\cos\left(2\alpha_0\sqrt{\bar\tau}\right)-2t\bar\tau\sqrt{\bar\tau}\sin\left(2\alpha_0\sqrt{\bar\tau}\right)\Delta\alpha_K\\
-\left(\frac {3t} {4\sqrt{\bar\tau}}\alpha_0\sin\left(2\alpha_0\sqrt{\bar\tau}\right)+\frac 1 2 t\alpha_0^2\cos\left(2\alpha_0\sqrt{\bar\tau}\right)\right)\delta^2(\tau)
\end{multline}
Substituting above relations into (\ref{even equa uneven}), we can solve $\Delta\alpha_K$ (\ref{def nabla}):
\eq
\label{tilde alpha}
\Delta\alpha_K=-\left(\frac{1-2\beta}{4\sqrt{3-4\beta}} \frac{\alpha_0^2}{\bar\tau\sqrt{\bar\tau}}+\frac{3+\beta}{8(1-\beta)}\frac{\alpha_0}{\bar\tau^2}\right)\delta^2(\tau),
\en
where $\beta=t/K$ and $\alpha_0$ is defined in (\ref{eta0 alpha0}).

  \item Analysing $\Delta\eta_K$ (\ref{def nabla}) by cancellation equation (\ref{eqa cancellation limit different}), we calculate $\Delta\eta_K$ explicitly in first order of $\delta^2(\tau)$ (\ref{def variance}). From cancellation equation (\ref{eqa cancellation limit different}), in first order of $\Delta\eta_K$,  we have
      \eq
      \label{eqa delta eta 1}
      \tan\left(2\eta_0\sqrt{\frac z K}\right)+2\sqrt{\frac{z}{K}}\left(1+\tan^2\left(2\eta_0\sqrt{\frac z K}\right)\right)\Delta\eta_K=\frac{2\sqrt K\sum_{i=1}^t\sqrt{\tau_i}\sin\left(2\alpha_K\sqrt{\tau_i}\right)}{\sqrt z\left(2\sum_{i=1}^t\cos\left(2\alpha_K\sqrt{\tau_i} \right)+K-2t\right)}
      \en
  For even distribution, we have the cancellation equation (\ref{eqa cancellation limit different}):
  \eq
  \label{eqa even cancellation}
  \tan\left(2\eta_0\sqrt{\frac{z}{K}}\right)=\frac{2t\sqrt{\bar\tau K}\sin\left(2\alpha_0\sqrt{\bar\tau}\right)}{\sqrt z\left(2t\cos\left(2\alpha_0\sqrt{\bar\tau}\right)+K-2t\right)}
  \en
  Substituting above equation and the approximate equations (\ref{eqa sin expand}) and (\ref{eqa cos}) in (\ref{eqa delta eta 1}), we can solve $\Delta\eta_K$ explicitly in first order of $\delta^2(\tau)$. The details of calculation please refer to the Appendix. The result is
  \eq
  \label{delta eta}
  \Delta\eta_{K}=-\left(
  \frac{(1-\beta)(1-2\beta)}{\sqrt{3-4\beta}}\frac{\alpha_0^2}{\bar\tau\sqrt{\bar\tau}}
  +\frac{(4\beta^3-8\beta^2+3\beta+1)}{4(1-\beta)^2}\frac{\alpha_0}{\bar\tau^2}
  +\frac{(1-2\beta)\sqrt{3-4\beta}}{16(1-\beta)}\frac 1 {\bar\tau^2\sqrt{\bar\tau}}
  \right)\delta^2(\tau)
  \en

  \item Prove the inequality (\ref{theorem ineqa}). The difference between number of queries for uneven distribution of target items and even distribution of target items is $f(\eta_0,\alpha_0)-f(\eta_K,\alpha_K)$, see (\ref{def total queries}). Because $f=\eta-\alpha$, we have
      \eq
      \label{f0 fk 1}
      f(\eta_0,\alpha_0)-f(\eta_K,\alpha_K)=\Delta\alpha_K-\Delta\eta_K
      \en
   According to $\Delta\alpha_K$ (\ref{tilde alpha}) and $\Delta\eta_K$ (\ref{delta eta}), we find
   \eq
   \Delta\alpha_K-\Delta\eta_k=\left(\frac{(1-2\beta){\sqrt{3-4\beta}}}{4}\frac{\alpha_0^2}{\bar\tau\sqrt{\bar\tau}}
  +\frac{\left(-8\beta^2+7\beta-1\right)}{8(1-\beta)}\frac{\alpha_0}{\bar\tau^2}+\frac{(1-2\beta)\sqrt{3-4\beta}}{16(1-\beta)}\frac{1}{\bar\tau^2\sqrt{\bar\tau}}        \right)\delta^2{(\tau)}
   \en
  In order to combine the three terms of above equation, we introduce the inequality
  \eq
  \label{f0 fk 2}
  \Delta\alpha_K-\Delta\eta_k>\left(\frac{\pi^2(1-2\beta){\sqrt{3-4\beta}}}{144}
  +\frac{\pi\left(-8\beta^2+7\beta-1\right)}{48(1-\beta)}+\frac{(1-2\beta)\sqrt{3-4\beta}}{16(1-\beta)}        \right)\frac{\delta^2{(\tau)}}{\bar\tau^{5/2}},
  \en
because $\alpha_0>\pi/6\sqrt{\bar\tau}$, see equation (\ref{eta0 alpha0}). At last, combine (\ref{f0 fk 1}) and (\ref{f0 fk 2}), then we have
  \eq
  f(\eta_0,\alpha_0)-f(\eta_K,\alpha_K)>g(\beta)\frac{\delta^2{(\tau)}}{\bar\tau^{5/2}}
  \en
  with
  \eq
  g(\beta)=\frac{\sqrt{3-4\beta}(1-2\beta)(\pi^2(1-\beta)+9)+3\pi(-8\beta^2+7\beta-1)}{144(1-\beta)}
  \en

  \item  The denominator of $g(\beta)$ is obviously positive. Therefore, we consider the inequality
      \eq
      \sqrt{3-4\beta}(1-2\beta)(\pi^2(1-\beta)+9)+3\pi(-8\beta^2+7\beta-1)>0,
      \en
      which has the solution $0<\beta<\beta_c$. Here $\beta_c$ is an algebraic number: it is a root of polynomial of $5$ degree:
      \eq
      \sqrt{3-4\beta_c}(1-2\beta_c)(\pi^2(1-\beta_c)+9)+3\pi(-8\beta_c^2+7\beta_c-1)=0
      \en
      Approximate expression is $\beta_c\approx 0.6281$. In many blocks limit ($K\rightarrow \infty$), namely $\beta=t/K\rightarrow 0$, we have
      \eq
      g(0)=\frac{\sqrt 3 \pi^2-3\pi+9\sqrt 3}{144}\approx 0.1615,
      \en
      which coincides with equation (\ref{eqa theorem 1}) in Theorem \ref{theorem 1}. The diagram of function $g(\beta)$ in the interval $(0,0.75)$ is in Figure \ref{fig1}. \textbf{So we proved Theorem \ref{theorem 2}}.

  \begin{figure}[htbp]
  \centering
  \includegraphics[width=9cm]{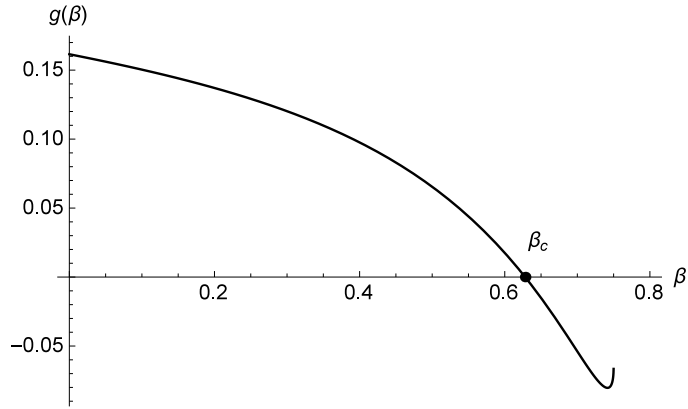}
  \caption{Diagram of function $g(\beta)$ (\ref{def g beta}) in region $\beta\in(0,0.75)$. Inequality $g(\beta)>0$ has the solution $0<\beta<\beta_c$.}\label{fig1}
  \end{figure}
  \end{enumerate}

\end{proof}

\section{Summary}

\label{section summary}

In this paper we study quantum partial search algorithm for multiple target items unevenly distributed in the target blocks (different target blocks have different number of target items). The number of global and local Grover iteration should satisfy the cancellation equation (\ref{eqa cancellation different}). We study the optimization of the algorithm and present the optimization condition (\ref{opti relation}) in large block limit ($b\rightarrow \infty$). We also prove that uneven distribution of multiple target items requires more queries than even distribution case by perturbation method for concentration of target blocks: $0<\beta<0.6281$ ($\beta=t/K$). It is open problem of studying the partial search algorithm for large $\beta$. Note that after the algorithm, the database will be in the state (\ref{eqa final state}).


\begin{appendices}

\section*{Appendix\hspace{.3cm} The proof of equation (\ref{delta eta})}

\label{appendix details}

The optimal value of $\alpha$ and $\eta$ (\ref{def eta alpha}) for even distribution of target items are denoted as $\eta_0$ and $\alpha_0$ (\ref{eta0 alpha0}). Their values can be rewritten as
\eq
\label{eta0 alpha0 beta}
\tan\left(2\eta_0\sqrt{\frac{z}{K}}\right)=\frac{\sqrt{3\beta-4\beta^2}}{1-2\beta},\quad\quad\quad \cos\left(2\alpha_0\sqrt{\bar\tau}\right)=\frac{1-2\beta}{2(1-\beta)},\quad\quad\quad \sin\left(2\alpha_0\sqrt{\bar\tau}\right)=\frac{\sqrt{3-4\beta}}{2(1-\beta)}
\en
with $\beta=t/K$. On the other hand, the approximate equations (\ref{eqa sin expand}) and (\ref{eqa cos}) can be reformulated as
\begin{align}
\label{eqa sin rewrite}
\sum_{i=1}^t\sqrt{\tau_i}\sin\left(2\alpha_K\sqrt{\tau_i}\right)=&t\sqrt{\bar\tau}\sin\left(2\alpha_0\sqrt{\bar\tau}\right)+P\delta^2(\tau);\\
\label{eqa cos rewrite}
\sum_{i=1}^t \cos\left(2\alpha_K\sqrt{\tau_i}\right)=&t\cos\left(2\alpha_0\sqrt{\bar\tau}\right)+Q\delta^2(\tau).
\end{align}
The coefficient $P$ is found as
\eq
P=-\frac 1 2 \sin\left(2\alpha_0\sqrt{\bar\tau}\right)\frac{t\alpha^2_0}{\sqrt{\bar\tau}}+\frac 1 4 \cos\left(2\alpha_0\sqrt{\bar\tau}\right)\frac{t\alpha_0}{\bar\tau}-\frac 1 8\sin\left(2\alpha_0\sqrt{\bar\tau}\right)\frac{t}{\bar\tau\sqrt{\bar\tau}}
+2t\bar\tau\cos\left(2\alpha_0\sqrt{\bar\tau}\right)\frac{\Delta\alpha_K}{\delta^2(\tau)}
\en
Substituting $\Delta\alpha_K$ (\ref{tilde alpha}) and the optimal value $\alpha_0$ (\ref{eta0 alpha0 beta}) into above relation, after some algebra, coefficient $P$ equals to
\eq
\label{eqa P}
P=-\frac{(1-\beta)}{\sqrt{3-4\beta}}\frac{t\alpha_0^2}{\sqrt{\bar\tau}}
-\frac{(1-2\beta)(1+\beta)}{4(1-\beta)^2}\frac{t\alpha_0}{\bar\tau}
+\frac{\sqrt{3-4\beta}}{16(1-\beta)}\frac{t}{\bar\tau\sqrt{\bar\tau}}
\en
Similar, the coefficient $Q$ in (\ref{eqa cos rewrite}) has the expression
\eq
\label{eqa Q}
Q=\frac{\sqrt{3-4\beta}}{2(1-\beta)^2}\frac{t\alpha_0}{\bar\tau\sqrt{\bar\tau}}
\en
With the help of approximation equations (\ref{eqa sin rewrite}) and (\ref{eqa cos rewrite}), the right hand side of equation (\ref{eqa delta eta 1}) in first order of $\delta^2(\tau)$ becomes
\begin{multline}
\text{RHS of (\ref{eqa delta eta 1})}=\frac{2\sqrt K \left(t\sqrt{\bar\tau}\sin\left(2\alpha_0\sqrt{\bar\tau}\right)+P\delta^2(\tau) \right)}
{\sqrt z \left(2t\cos\left(2\alpha_0\sqrt{\bar\tau}\right)+2Q\delta^2(\tau)+K-2t\right)}\\
=\frac{2t\sqrt{K\bar\tau}\sin\left(2\alpha_0\sqrt{\bar\tau}\right)}{\sqrt z\left(2t\cos\left(2\alpha_0\sqrt{\bar\tau}\right)+K-2t\right)}
+\frac{2\sqrt KP\delta^2(\tau)}{\sqrt z\left(2t\cos\left(2\alpha_0\sqrt{\bar\tau}\right)+K-2t\right)}
-\frac{4t\sqrt{K\bar\tau}\sin\left(2\alpha_0\sqrt{\bar\tau}\right) Q \delta^2(\tau)}
{\sqrt z\left(2t\cos\left(2\alpha_0\sqrt{\bar\tau}\right)+K-2t\right)^2}
\end{multline}
Combining above result with left hand side of equation (\ref{eqa delta eta 1}), and using the identity (\ref{eqa even cancellation}), we have the result
\eq
\left(1+\tan^2\left(2\eta_0\sqrt{\frac z K}\right)\right)\Delta\eta_K=
\frac{KP\delta^2(\tau)}{z\left(2t\cos\left(2\alpha_0\sqrt{\bar\tau}\right)+K-2t\right)}
-\frac{2tK\sqrt{\bar\tau}\sin\left(2\alpha_0\sqrt{\bar\tau}\right)A\delta^2(\tau)}{z\left(2t\cos\left(2\alpha_0\sqrt{\bar\tau}\right)+K-2t\right)^2}
\en
Note that $\eta_0$ and $\alpha_0$ have the expressions (\ref{eta0 alpha0 beta}). Then above equation gives
\eq
\Delta\eta_K
=\left((1-2\beta)\frac{P}{t\bar\tau}-\beta\sqrt{3-4\beta}\frac{Q}{t\sqrt{\bar\tau}}\right)\delta^2(\tau)
\en
Last, substituting coefficient $P$ (\ref{eqa P}) and $Q$ (\ref{eqa Q}) into above equation, after some algebra, we solve $\Delta\eta_K$ in first order of $\delta^2(\tau)$ explicitly:
\eq
  \Delta\eta_{K}=-\left(
  \frac{(1-\beta)(1-2\beta)}{\sqrt{3-4\beta}}\frac{\alpha_0^2}{\bar\tau\sqrt{\bar\tau}}
  +\frac{(4\beta^3-8\beta^2+3\beta+1)}{4(1-\beta)^2}\frac{\alpha_0}{\bar\tau^2}
  +\frac{(1-2\beta)\sqrt{3-4\beta}}{16(1-\beta)}\frac 1 {\bar\tau^2\sqrt{\bar\tau}}
  \right)\delta^2(\tau)
  \en

\end{appendices}




\end{document}